\newtheorem{proposition}{Proposition}[section]
\begin{document}

\title{Network-Level Cooperation in Energy Harvesting Wireless Networks}

\author{\authorblockN{Nikolaos Pappas\textsuperscript{*}, Marios Kountouris\textsuperscript{*}, Jeongho Jeon\textsuperscript{\ddag}, Anthony Ephremides\textsuperscript{\ddag}, Apostolos Traganitis\textsuperscript{\dag}}\\
\authorblockA{\textsuperscript{*} Sup\'{e}lec, Department of Telecommunications, Gif-sur-Yvette, France\\
\textsuperscript{\ddag} Department of Electrical and Computer Engineering and Institute for Systems Research\\
University of Maryland, College Park, MD 20742\\
\textsuperscript{\dag}Computer Science Department, University of Crete, Greece\\Institute of Computer Science, Foundation for Research and Technology - Hellas (FORTH)\\
Email: \{nikolaos.pappas, marios.kountouris\}@supelec.fr, \{jeongho, etony@umd\}.edu, tragani@ics.forth.gr
}

\thanks{This research has been partly supported by the ERC Starting Grant 305123 MORE (Advanced Mathematical Tools for Complex Network Engineering).}}

\maketitle

\begin{abstract}
We consider a two-hop communication network consisted of a source node, a relay and a destination node in which the source and the relay node have external traffic arrivals. The relay forwards a fraction of the source node's traffic to the destination and the cooperation is performed at the network level. In addition, both source and relay nodes have energy harvesting capabilities and an unlimited battery to store the harvested energy. We study the impact of the energy constraints on the stability region. Specifically, we provide inner and outer bounds on the stability region of the two-hop network with energy harvesting source and relay.
\end{abstract}

\IEEEpeerreviewmaketitle

\vspace{-4mm}

\section{Introduction}
\label{sec:ECOOP_intro}

Taking advantage of renewable energy resources from the environment, also known as energy harvesting, allows unattended operability of infrastructure-less wireless networks. There are various forms of energy that can be harvested, including thermal, solar, acoustic, wind, and even ambient radio power. However, the additional functionality of harvesting energy calls for our assessment of the system long-term performance such as in terms of the throughput and stability. In \cite{jeon:stability}, the slotted ALOHA protocol was considered for a network of nodes having energy harvesting capability and the maximum stable throughput region was obtained for bursty traffic.

Cooperative communication helps overcome fading and attenuation in wireless networks. Most cooperative techniques studied so far have been on physical layer cooperation that achieves non-trivial benefits~\cite{b:Yates-NOW}. Nevertheless, there is evidence that the same gains can be achieved with
network layer cooperation, which is plain relaying without any physical layer considerations~\cite{b:Sadek, b:Rong1, b:Pappas-ISIT}.
A key difference between physical layer and network layer cooperation is that the latter can capture the bursty nature of traffic.

In \cite{b:Pappas-JCN}, the authors studied the stability region of a cognitive network under energy constraints. They employed an opportunistic multiple access protocol that observes the priorities among the users to better utilize the limited energy resources. The impact of network-level cooperation in an energy harvesting network with a pure-relay (without its own traffic) under scheduled access was studied in\cite{b:Krikidis_Energy_Harv}.

In this paper, we study the impact of energy constraints on a network with a source-user, a relay and a destination under random access of the medium. Specifically, we provide necessary and sufficient conditions for the stability of a network consisting of a source, a relay and a destination node as shown in Fig.~\ref{fig:ECOOP_model}. We consider the collision channel with erasures and random access of the medium. The source and the relay node have external arrivals; furthermore, the relay is forwarding part of the source node's traffic to the destination, the cooperation is taking place at the network-level.

The analysis is not trivial even for such a simple network because the service process of a node not only depends on the status of its battery but also on the idleness or not of the other node. Note that the reason why the exact region is known only for the two-node and the three-node cases (with or without energy availability constraints) is the \emph{interaction} between the queues of the nodes \cite{tsybakov:ergodicity, rao:stability, Szpankowski:stability}.
First, we obtain an inner bound of the stability region, afterwards we apply the stochastic dominance technique \cite{rao:stability} and Loynes' theorem \cite{b:Loynes} to obtain an outer bound of the stability region.

The rest of this paper is organized as follows. In section~\ref{sec:ECOOP_model}, we define the stability region, describe the channel model, and explain the packet arrival and energy harvesting models. In~\ref{sec:ECOOP_main}, we present an inner and an outer bound of the stability region, the proofs of the results are given in~\ref{sec:ECOOP_analysis}. Finally, we conclude our work in~\ref{sec:ECOOP_conclusion}.

\section{System Model}\label{sec:ECOOP_model}

We consider a time-slotted system in which the nodes randomly access a common receiver as shown in Fig.~\ref{fig:ECOOP_model}, where both source and relay nodes are powered from randomly time-varying renewable energy sources. Each node stores the harvested energy in a battery of unlimited capacity. We denote with $S$, $R$, and $D$ the source, the relay and the destination, respectively. Packet traffic originates from $S$ and $R$. Because of the wireless broadcast nature, $R$ may receive some of the packets transmitted from $S$, which in turn can be relayed to $D$. The packets from $S$ that fail to be received by $D$ but are successfully received by $R$ are relayed by $R$. A half-duplex constraint is imposed here, i.e. $R$ can overhear $S$ only when it is idle.

Each node has an infinite size buffer for storing incoming packets, and the transmission of each packet occupies one time slot. Node $R$ has separate queues for the exogenous arrivals and the endogenous arrivals that are relayed through $R$. Nevertheless, we can let $R$ have a single queue and merge all arrivals into a single queue as the achievable stable throughput region is not affected~\cite{b:Rong3}. This is due to the fact that the link quality between $R$ and $D$ is independent of which packet is selected for transmission.

The packet arrival and energy harvesting processes at $S$ and $R$ are assumed to be Bernoulli distributed with rates $\lambda_S$, $\delta_S$ and $\lambda_R$, $\delta_R$, respectively, and are independent of each other.
$Q_i$ and $B_i$, $i=S,R$, denote the steady state number of packets and energy units in the queue and the energy source at node $i$, respectively. Furthermore, a node $i$ is called active if both its packet queue and its battery are nonempty at the same time, which is denoted by the event $\mathcal{A}_i = \{ B_i \neq 0 \} \cap \{ Q_i \neq 0 \}$ and idle otherwise (denoted by $\overline{\mathcal{A}_i}$).
In each time slot, nodes $S$ and $R$ attempt to transmit with probabilities $q_S$ and $q_R$, respectively, if they are active.
Decisions on transmission are made independently among the nodes and each transmission consumes one energy unit. We assume collision channel with erasures in which if both $S$ and $R$ transmit at the same time slot, a collision occurs and both transmissions fail. The probability that a packet transmitted by node $i$ is successfully decoded at node $j (\neq i)$ is denoted by $p_{ij}$, which is the probability that the signal-to-noise ratio (SNR) over the specified link exceeds a certain threshold for successful decoding. These erasure probabilities capture the effect of random fading at the physical layer. The probabilities $p_{SD}$, $p_{RD}$, and $p_{SR}$ denote the success probabilities over the link $S-D$, $R-D$, and $S-R$, respectively. We also assume that node $R$ has a better channel to $D$ than $S$, i.e. $p_{RD} > p_{SD}$.

The cooperation is performed at the protocol (network) level as follows: when $S$ transmits a packet, if $D$ decodes it successfully, it sends an ACK and the packet exits the network; if $D$ fails to decode the packet but $R$ does, then $R$ sends an ACK and takes over the responsibility of delivering the packet to $D$ by placing it in its queue. If neither $D$ nor $R$ decode (or if $R$ does not store the packet), the packet remains in $S$'s queue for retransmission. The ACKs are assumed to be error-free, instantaneous, and broadcasted to all relevant nodes.

The average service rate for the source node is given by

\begin{equation} \label{eqn:service_rate_S}
\begin{aligned}
\mu_S =  \left\{ q_S (1-q_R) \mathrm{Pr}\left(B_S \neq 0, \mathcal{A}_R \right) +q_S \mathrm{Pr}(B_S \neq 0,\overline{\mathcal{A}_R} ) \right\} \\
\times \left[p_{SD}+(1-p_{SD})p_{SR} \right],
\end{aligned}
\end{equation}
and for the relay is given by

\begin{equation} \label{eqn:service_rate_R}
\mu_R= \left\{q_R (1-q_S) \mathrm{Pr}\left( B_R \neq 0, \mathcal{A}_S \right)+ q_R \mathrm{Pr}(B_R \neq 0,\overline{\mathcal{A}_S} ) \right\} p_{RD}.
\end{equation}

Denote by $Q_i^t$ the length of queue $i$ at the beginning of time slot $t$. Based on the definition in~\cite{Szpankowski:stability}, the queue is said to be \emph{stable} if
\begin{equation*}\label{eqn:PC_definition_stability}
    \lim_{t \rightarrow \infty} {Pr}[Q_i^t < {x}] = F(x)  \text{ and } \lim_{ {x} \rightarrow \infty} F(x) = 1
\end{equation*}
Loynes' theorem~\cite{b:Loynes} states that if the arrival and service processes of a queue are strictly jointly stationary and the average arrival rate is less than the average service rate, then the queue is stable. If the average arrival rate is greater than the average service rate, then the queue is unstable and the value of $Q_i^t$ approaches infinity almost surely. The stability region of the system is defined as the set of arrival rate vectors $\boldsymbol{\lambda}=(\lambda_1, \lambda_2)$ for which the queues in the system are stable.

\begin{figure}[t]
\centering
\includegraphics[scale=0.55]{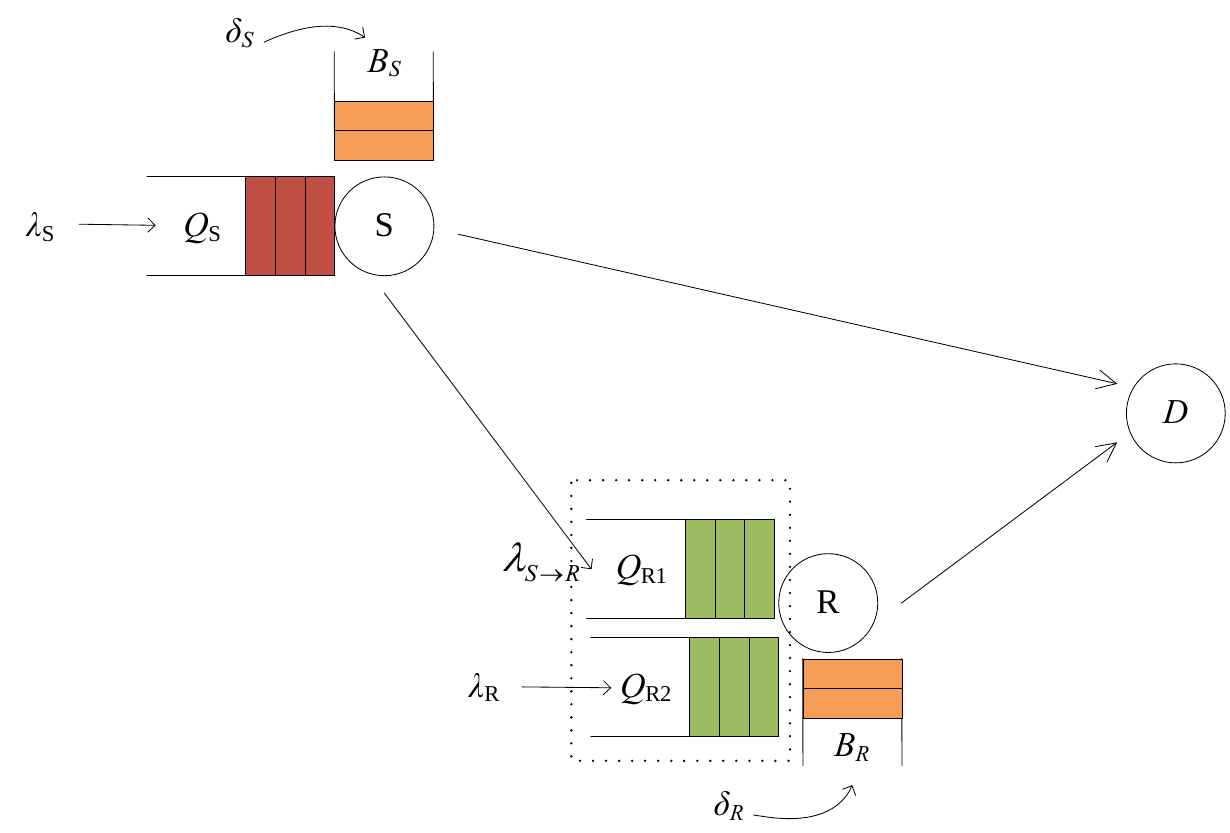}
\caption{The wireless network topology with energy harvesting capabilities.}
\label{fig:ECOOP_model}
\end{figure}

\section{Main Results}\label{sec:ECOOP_main}

This section presents the stability conditions of a network consisting of energy harvesting source and relay, and a destination, as depicted in Fig.~\ref{fig:ECOOP_model}.
The source and the relay are assumed to have infinite size queues to store the harvested energy.

The next proposition presents an inner bound on the stability region by providing sufficient conditions for stability.

\begin{proposition} \label{prop:sufficient_conditions}
If $(\lambda_S, \lambda_R) \in \mathcal{R}_{inner}$, where $\mathcal{R}_{inner}$ is described in (\ref{eqn:R_inner}), then the network in Fig.~\ref{fig:ECOOP_model} is stable.
\end{proposition}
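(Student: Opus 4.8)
The plan is to reduce the stability question to two applications of Loynes' theorem, one per queue, after replacing the genuinely state-dependent service rates of (\ref{eqn:service_rate_S})--(\ref{eqn:service_rate_R}) by \emph{constant} lower bounds. To get there I would decouple, in a worst-case sense, the three interactions present in the model: the coupling of each node's data queue with its own battery, and the coupling of the two nodes through collisions.

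First I would handle the batteries. Because a transmission, and hence the consumption of one energy unit, occurs only when node $i$ is active, its battery behaves as a discrete-time queue fed by a Bernoulli stream of rate $\delta_i$ and drained with probability at most $q_i$. Treating it as a \emph{saturated} Geo/Geo/1 queue, in which the node always has a packet to send, maximizes the drain and therefore minimizes the occupancy, so a coupling argument gives the state-independent bound $\Pr(B_i\neq0)\geq\min(1,\delta_i/q_i)$. I would also record the elementary conservation inequality $q_i\Pr(\mathcal{A}_i)\leq\min(\delta_i,q_i)$: over any horizon no more energy can be spent than has been harvested, so the per-slot transmission probability never exceeds the supply.

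Next I would strip the inter-node coupling out of the service rates. Writing $\mathcal{A}_R,\overline{\mathcal{A}_R}$ as a partition, (\ref{eqn:service_rate_S}) rearranges to $\mu_S=q_S[p_{SD}+(1-p_{SD})p_{SR}]\bigl(\Pr(B_S\neq0)-q_R\Pr(B_S\neq0,\mathcal{A}_R)\bigr)$; bounding the collision term by $\Pr(B_S\neq0,\mathcal{A}_R)\leq\Pr(B_S\neq0)$ and then inserting $\Pr(B_S\neq0)\geq\min(1,\delta_S/q_S)$ yields the constant bound $\mu_S\geq q_S(1-q_R)\min(1,\delta_S/q_S)[p_{SD}+(1-p_{SD})p_{SR}]$, and symmetrically $\mu_R\geq q_R(1-q_S)\min(1,\delta_R/q_R)p_{RD}$ (using $\Pr(\mathcal{A}_R)\leq\min(1,\delta_R/q_R)$ from the conservation inequality in place of the crude step gives the variant I would compare against the stated form of (\ref{eqn:R_inner})). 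For the relay I must additionally account for the endogenous traffic: once $S$ is shown stable, flow conservation fixes the hand-off rate into $R$ at $\lambda_S(1-p_{SD})p_{SR}/[p_{SD}+(1-p_{SD})p_{SR}]$, so the effective load at $R$ is $\lambda_R$ plus this quantity. Comparing each load to the matching lower-bound service rate reproduces the conditions I expect to be (\ref{eqn:R_inner}).

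The delicate point, on which I would spend the most care, is that $\mu_S$ and $\mu_R$ are steady-state averages of a service process that is itself a function of the evolving queue and battery states, so Loynes' theorem cannot be invoked against these expressions directly. I would make the argument rigorous by the stochastic-dominance route: construct a dominant system in which both data queues are kept permanently backlogged (dummy packets when empty), so that the two batteries become independent Geo/Geo/1 queues and the service processes become strictly stationary with the constant rates derived above; then verify by coupling that the dominant queues upper-bound the true ones. Here the conservation inequality $q_i\Pr(\mathcal{A}_i)\leq\min(\delta_i,q_i)$ is exactly what guarantees that saturating the data queues does not secretly lower a node's transmission probability below the real system's, so the interference in the dominant system genuinely dominates despite the batteries draining faster; establishing this monotonicity cleanly, together with the stationarity of the compound arrival stream at the relay, is the main obstacle. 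Once the dominant system is shown stable under (\ref{eqn:R_inner}) by two invocations of Loynes' theorem, stability transfers to the original network.
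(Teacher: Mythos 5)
Your proposal is correct and takes essentially the same route as the paper: saturate both data queues so that each battery decouples into an independent Geo/Geo/1 queue with $\mathrm{Pr}(B_i \neq 0) = \min(1, \delta_i/q_i)$, obtain the constant saturated rates $\mu_S^s = \min(\delta_S,q_S)\left[1-\min(\delta_R,q_R)\right]\left[p_{SD}+(1-p_{SD})p_{SR}\right]$ and $\mu_R^s = \min(\delta_R,q_R)\left[1-\min(\delta_S,q_S)\right]p_{RD}$, and apply Loynes' theorem to $\lambda_S < \mu_S^s$ and $\lambda_{R,total} < \mu_R^s$, with the same flow-conservation expression $\lambda_{S \rightarrow R} = \frac{(1-p_{SD})p_{SR}}{p_{SD}+(1-p_{SD})p_{SR}}\lambda_S$ for the endogenous load at the relay. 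The only substantive difference is that you spell out (and rightly flag as the delicate point) the dominance transfer from the saturated system back to the original one, a step the paper leaves implicit by directly asserting sufficiency from the saturated throughputs.
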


\begin{proof}
The proof is given in Section~\ref{sec:ECOOP_analysis_suff}.
\end{proof}

The following proposition describes an outer bound of the stability region by obtaining necessary conditions for stability.

\begin{proposition} \label{prop:necessary_conditions}
If the network in Fig.~\ref{fig:ECOOP_model} is stable then $(\lambda_S, \lambda_R) \in \mathcal{R}$, where $ \mathcal{R} = \mathcal{R}_1 \bigcup \mathcal{R}_2$. $\mathcal{R}_1$ and $\mathcal{R}_2$ are described by (\ref{eqn:R1}) and (\ref{eqn:R2}) respectively.
\end{proposition}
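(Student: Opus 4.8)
The plan is to derive the necessary conditions through the stochastic dominance technique of \cite{rao:stability}, together with Loynes' theorem \cite{b:Loynes}. The obstacle in the original system is that the service rate of each node, as written in \eqref{eqn:service_rate_S} and \eqref{eqn:service_rate_R}, depends on whether the \emph{other} node is active, i.e. on a joint event involving both the packet queue and the battery of that node; this coupling blocks a direct evaluation. To break it I would introduce two dominant systems $D_1$ and $D_2$. In $D_1$ the source $S$ transmits a dummy packet whenever its packet queue is empty but its battery is non-empty, so that $S$ attempts transmission with probability $q_S$ in every slot with $B_S \neq 0$; symmetrically, in $D_2$ the relay $R$ transmits dummies whenever $B_R \neq 0$. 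Coupling each $D_k$ to the original system through identical arrival, energy-harvesting, and erasure realizations, the dummy transmissions can only add collisions for the non-saturated node and never shorten any queue, so the queue lengths in $D_k$ dominate those of the original system sample-path-wise.

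Next I would solve each dominant system in closed form. In $D_1$, since $Q_S$ is never empty, $\mathcal{A}_S$ collapses to $\{B_S \neq 0\}$ and $\overline{\mathcal{A}_S}$ to $\{B_S = 0\}$, so \eqref{eqn:service_rate_R} becomes a constant once the battery occupancies are known. Each battery is an independent discrete-time queue fed by Bernoulli energy at rate $\delta_i$ and drained one unit per transmission; under saturation of $S$ this yields $\Pr(B_S \neq 0) = \min\{1,\delta_S/q_S\}$, while the relay battery drains only when $R$ is active and is handled likewise. Substituting these marginals (using the independence of the harvesting processes) gives a fixed $\mu_R^{D_1}$, and Loynes' theorem returns $\lambda_R < \mu_R^{D_1}$ as the relay's stability condition. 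Conditioned on $R$ being stable, I would extract $\Pr(\mathcal{A}_R)$ from the relay's utilization, insert it into \eqref{eqn:service_rate_S} to obtain $\mu_S^{D_1}$, and apply Loynes' theorem again to get $\lambda_S < \mu_S^{D_1}$. These two inequalities delineate $\mathcal{R}_1$, and the symmetric analysis of $D_2$ delineates $\mathcal{R}_2$.

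To turn the dominance relation into \emph{necessary} conditions I would invoke the indistinguishability argument of \cite{rao:stability}: $D_k$ and the original system evolve identically as long as the saturating node's real queue is non-empty, the dummies being sent only after it drains. Hence if the saturating node is itself unstable in $D_k$ it never empties, the dummies are never actually transmitted, $D_k$ coincides with the original along the sample path, and the original is unstable as well. Taking the contrapositive over both systems forces any stable arrival pair into at least one region, so $(\lambda_S,\lambda_R) \in \mathcal{R}_1 \cup \mathcal{R}_2 = \mathcal{R}$. I expect the main obstacle to be twofold. First, the battery analysis: because a node is active only when its queue \emph{and} its battery are simultaneously non-empty, the drain rate of the energy queue is itself coupled to the packet-queue occupancy, and the consumption/harvesting balance (including the saturated regime $\Pr(B_i \neq 0)=1$) must be resolved carefully — it is precisely this residual energy coupling that keeps $\mathcal{R}$ an outer bound rather than the exact region. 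Second, the indistinguishability step must be made rigorous, so that the pathwise dominance, which on its own only yields an inner bound, is correctly converted into the outer bound through the saturation/coincidence dichotomy.
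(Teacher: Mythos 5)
Your computational skeleton coincides with the paper's: the same two hypothetical systems (source sending dummies in the first, relay in the second), the decoupling of each battery into a discrete-time $M/M/1$ queue yielding $\Pr(B_i \neq 0) = \min\left(\delta_i/q_i, 1\right)$, Loynes' criterion applied in each hypothetical system, and the extraction of the fraction of active slots of the saturated-side's counterpart from its utilization in order to evaluate the other node's service rate via (\ref{eqn:service_rate_S}) or (\ref{eqn:service_rate_R}). One slip in that part: the relay's Loynes condition must be imposed on the \emph{total} arrival rate $\lambda_{R,total} = \lambda_R + \frac{(1-p_{SD})p_{SR}}{p_{SD}+(1-p_{SD})p_{SR}}\lambda_S$, which requires first computing the endogenous rate $\lambda_{S \rightarrow R}$ through the conditional probability $\mathrm{Pr}(S_B \,|\, S_A)$ that a packet departing from $S$ is relayed; you write $\lambda_R < \mu_R^{D_1}$ and never produce this term, even though it supplies the $\lambda_S$ coefficients appearing in both (\ref{eqn:R1}) and (\ref{eqn:R2}).

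The genuine gap is in the step that turns these computations into the proposition. You assert (i) that dummy transmissions ``never shorten any queue,'' so each $D_k$ pathwise dominates the original system, and (ii) that the indistinguishability argument of \cite{rao:stability} converts this into necessity. Both steps are exactly what the paper identifies as failing in the energy-harvesting setting. Dummy packets consume energy: under a common realization of arrivals, harvests and erasures, the saturating node in $D_k$ drains its battery in slots where the original node would have stayed silent, so there are later slots in which the node in $D_k$ is energy-starved and silent while in the original system it transmits; in those slots the \emph{other} node gets a collision-free slot in $D_k$ but not in the original, and its queue in $D_k$ can be strictly shorter. So the dominance in (i) is false. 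Step (ii) collapses for the same reason: even on the event that the saturating node's packet queue never empties after some finite time, dummies sent before that time have already made the battery trajectories of $D_k$ and of the original system diverge, and since the batteries are part of the state, the two systems do not coincide thereafter — your saturation/coincidence dichotomy never gets off the ground. Observe, moreover, that if (i) and (ii) both held, you would have shown $\mathcal{R}_1 \cup \mathcal{R}_2$ to be the \emph{exact} stability region, which the paper explicitly disclaims: it proves the inner bound separately (Proposition~\ref{prop:sufficient_conditions}, via saturated queues, giving the strictly smaller region (\ref{eqn:R_inner})) precisely because the dominance-based region is only an outer bound here. A correct justification of necessity cannot rest on sample-path coupling with dummy packets; it must instead argue in time-average terms that each hypothetical system maximizes the interference inflicted on the other node — e.g., using the battery flow constraint $q_i \,\mathrm{Pr}(\mathcal{A}_i) \leq \min(\delta_i, q_i)$ satisfied by the original system — which is the (admittedly terse) position the paper itself takes.
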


\begin{proof}
The proof is given in Section~\ref{sec:ECOOP_analysis_necc}.
\end{proof}

\begin{figure*}[!t]
\begin{align} \label{eqn:R_inner}
\mathcal{R}_{inner} = \left\{ (\lambda_{S},\lambda_{R}) :  \lambda_S < \min \left(\delta_S ,q_S \right) \left[1 - \min \left(\delta_R ,q_R \right) \right]  \left[p_{SD}+(1-p_{SD})p_{SR} \right] , \right. \notag \\
\left. \lambda_{R}+ \frac{(1-p_{SD})p_{SR}} {p_{SD}+(1-p_{SD})p_{SR}}\lambda_{S}< \min \left(\delta_R ,q_R \right) \left[1 - \min \left(\delta_S ,q_S \right) \right] p_{RD} \right\}.
\end{align}

\begin{align} \label{eqn:R1}
\mathcal{R}_1 =  \left\{ (\lambda_{S},\lambda_{R}) : \left[1+ \frac{\min(\delta_S,q_S) (1-p_{SD}) p_{SR} }{\left[1-\min(\delta_S,q_S)\right] p_{RD}} \right] \lambda_S + \right. \notag \\
\left.  +\frac{\min(\delta_S,q_S) \left[ p_{SD} + (1-p_{SD})p_{SR} \right]}{\left[1-\min(\delta_S,q_S)\right]p_{RD}} \lambda_R < \min(\delta_S,q_S) \left[ p_{SD} + (1-p_{SD})p_{SR} \right] , \right. \notag \\
\left. \lambda_{R}+ \frac{(1-p_{SD})p_{SR}} {p_{SD}+(1-p_{SD})p_{SR}}\lambda_{S} < \min(\delta_R,q_R) \left[1- \min(\delta_S,q_S)\right] p_{RD} \right\}.
\end{align}

\begin{align} \label{eqn:R2}
\mathcal{R}_2 =  \left\{ (\lambda_{S},\lambda_{R}) : \lambda_R + \frac{\left[1-\min(\delta_R,q_R)\right](1-p_{SD})p_{SR}+\min(\delta_R,q_R) p_{RD}}{\left[1-\min(\delta_R,q_R)\right] \left[p_{SD}+(1-p_{SD})p_{SR} \right]} \lambda_S < \min(\delta_R,q_R) p_{RD} , \right. \notag \\
\left. \lambda_S < \min(\delta_S,q_S) \left[1-\min(\delta_R,q_R)\right] \left[p_{SD}+(1-p_{SD})p_{SR} \right] \right\}
\end{align}
\end{figure*}

\begin{figure}[t]

\centering
 \subfigure[$\mathcal{R}_1$]{
 \includegraphics[scale=0.45]{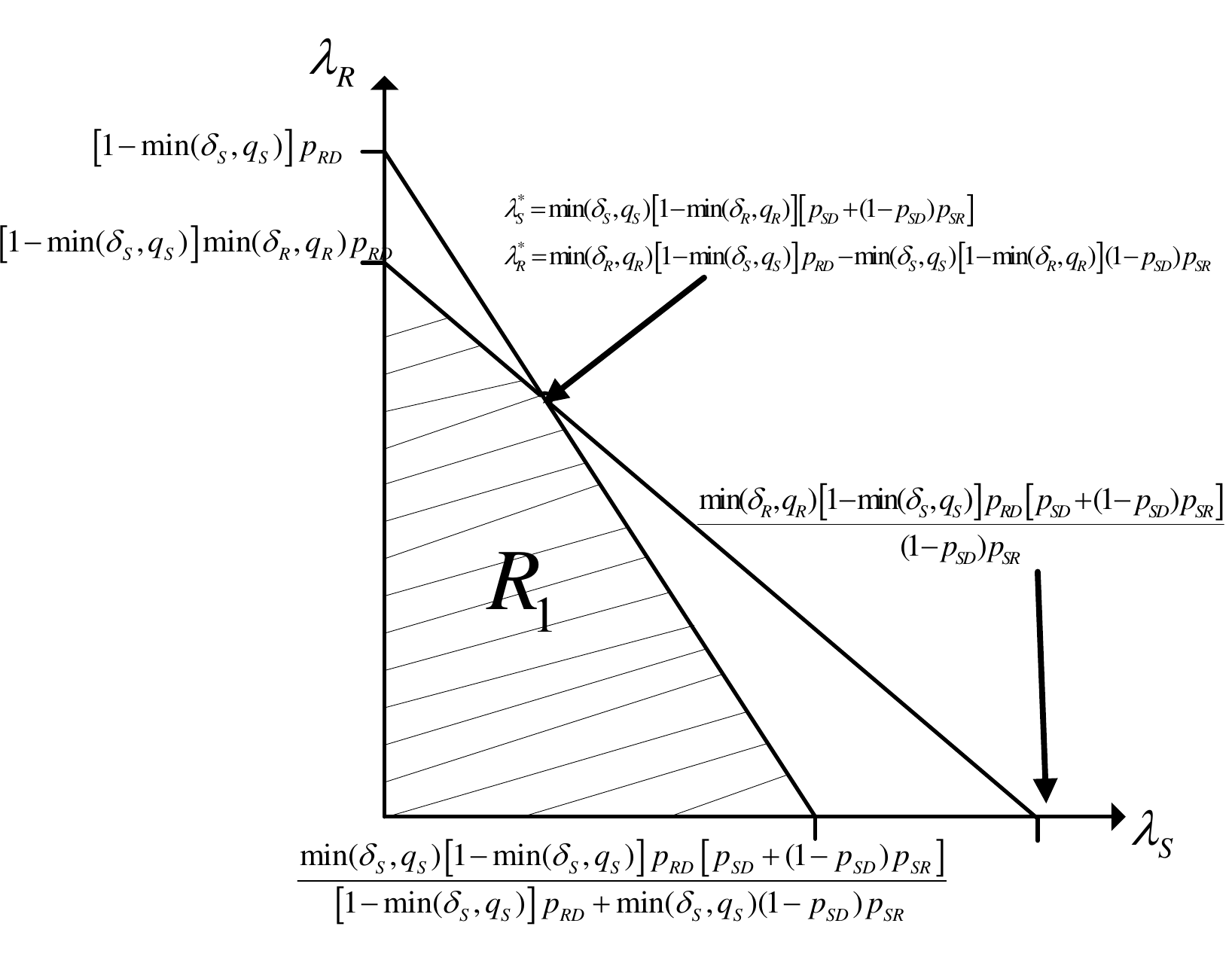}
 \label{fig:ECOOP_1stdom}
 }

  \subfigure[$\mathcal{R}_2$.]{
  \includegraphics[scale=0.45]{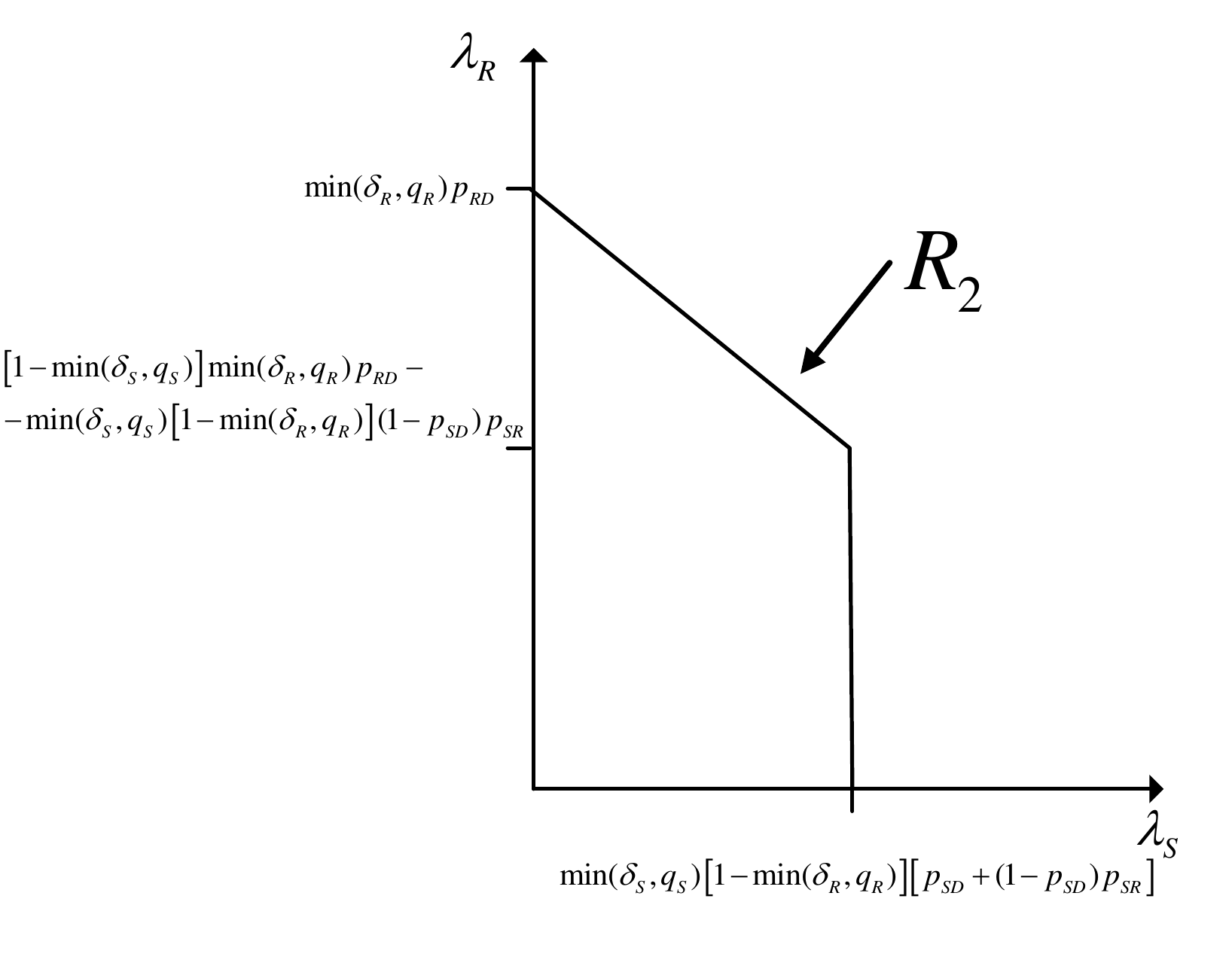}
  \label{fig:ECOOP_2nddom}
  }
  \caption{ An outer bound of the stability region $ \mathcal{R} = \mathcal{R}_1 \bigcup \mathcal{R}_2$,  described in Proposition~\ref{prop:necessary_conditions}.}
\end{figure}

Fig.~\ref{fig:ECOOP_1stdom} and~\ref{fig:ECOOP_2nddom} illustrate the $\mathcal{R}_1$ and $\mathcal{R}_2$ described in Proposition~\ref{prop:necessary_conditions}.

\section{Analysis} \label{sec:ECOOP_analysis}

To derive the stability condition for the queue in the relay node, we need to calculate the total arrival rate. There are two independent arrival processes at the relay: the exogenous traffic with arrival rate $\lambda_R$ and the endogenous traffic from $S$. Denote by $S_A$ the event that $S$ transmits a packet and the packet leaves the queue, then

\begin{equation}
\mathrm{Pr}(S_A)=\left[1-q_{R}\mathrm{Pr}(\mathcal{A}_R)  \right] \left[p_{SD}+(1-p_{SD})p_{SR} \right].
\end{equation}

Among the packets that depart from the queue of $S$, some will exit the network because they are decoded by the destination directly, and some will be relayed by $R$. Denote by $S_B$ the event that the transmitted packet from $S$ will be relayed from $R$, then
\begin{equation}
\mathrm{Pr}(S_B)=\left[1-q_{R}\mathrm{Pr}(\mathcal{A}_R) \right] (1-p_{SD})p_{SR}.
\end{equation}
The conditional probability that a transmitted packet from $S$ is relayed by $R$ given that the transmitted packet exits node $S$'s queue is given by
\begin{equation}
\mathrm{Pr}(S_B | S_A)=\frac{(1-p_{SD})p_{SR}} {p_{SD}+(1-p_{SD})p_{SR}}.
\end{equation}
The arrivals from the source to the relay are
\begin{equation}
\lambda_{S \rightarrow R} = \mathrm{Pr}(S_B | S_A) \lambda_S.
\end{equation}

The total arrival rate at the relay node is given by
\begin{equation}
\lambda_{R,total}=\lambda_{R}+ \frac{(1-p_{SD})p_{SR}} {p_{SD}+(1-p_{SD})p_{SR}}\lambda_{S}.
\end{equation}

\subsection{Sufficient Conditions} \label{sec:ECOOP_analysis_suff}

A queue is considered saturated if in each time slot there is always a packet to transmit, i.e. the queue is never empty.
Assuming saturated queues for the source and the relay node, the saturated throughput for the source node is given by

{\scriptsize
\begin{equation}
\begin{aligned}
\mu_{S}^{s} = \left\{ q_S (1-q_R) \mathrm{Pr}\left(B_S\neq 0, B_R \neq 0 \right) + q_S  \mathrm{Pr} \left(B_S \neq 0, B_R = 0 \right) \right\} \\ \times \left[p_{SD}+(1-p_{SD})p_{SR} \right],
\end{aligned}
\end{equation}
}

and for the relay is given by

\begin{align}
\mu_{R}^{s} = \left\{q_R (1-q_S) \mathrm{Pr}\left(B_S \neq 0, B_R \neq 0 \right)+ \right. \notag \\
\left. + q_R \mathrm{Pr}(B_S = 0 ,B_R \neq 0) \right\} p_{RD}.
\end{align}

Each node transmits with probability $q_i$, $i=S,R$, whenever its battery is not empty and each transmission demands one energy packet. Each energy queue $i$ is then decoupled and forms a discrete-time $M/M/1$ queue with input rate $\delta_i$
and service rate $q_i$, thus the probability the energy queue to be empty is given by
\begin{equation} \label{eqn:prob_nonempty_battery}
\mathrm{Pr}\left(B_i \neq 0 \right) = \min \left( \frac{\delta_i}{q_i} , 1 \right).
\end{equation}

Then, after some calculations we obtain that the saturated throughput for the source is
\begin{equation}
\mu_{S}^{s} = \min \left(\delta_S ,q_S \right) \left[1 - \min \left(\delta_R ,q_R \right) \right]  \left[p_{SD}+(1-p_{SD})p_{SR} \right],
\end{equation}
and for the relay is

\begin{equation} \label{eqn:saturated_thr_R}
\mu_{R}^{s} = \min \left(\delta_R ,q_R \right) \left[1 - \min \left(\delta_S ,q_S \right) \right] p_{RD}.
\end{equation}

The sufficient conditions ($\mathcal{R}_{inner}$) for the stability are obtained by $\lambda_S < \mu_{S}^{s}$ and $\lambda_{R,total} <\mu_{R}^{s}$ and are given by (\ref{eqn:R_inner}), in Proposition~\ref{prop:sufficient_conditions}.

\subsection{Necessary Conditions} \label{sec:ECOOP_analysis_necc}

The average service rates for the source and the relay are given by (\ref{eqn:service_rate_S}) and (\ref{eqn:service_rate_R}), respectively. The average service rate of each queue depends on the status of its own energy and also the
queue size and the energy statues of the other queues. The coupling between the queues (both packet and energy) forms a four dimensional Markov chain which makes the analysis hard.

The stochastic dominant technique~\cite{rao:stability} is essential in order to decouple the interaction between the queues, and thus to characterize the stability region. That is we first construct parallel dominant systems in which one of the nodes transmits dummy packets when its packet queue is empty. Note that even in the dominant system a node cannot transmit if the energy source is empty (because even the dummy packet consumes one energy unit).

We consider the first hypothetical system in which the source node transmits dummy packets when its queue is empty and all the other assumptions remain intact.
The average service rate for the relay given by (\ref{eqn:service_rate_R}) becomes

\begin{equation}
\mu_R= \left\{q_R (1-q_S) \mathrm{Pr}\left(B_S \neq 0, B_R \neq 0 \right)+ q_R \mathrm{Pr}(B_S = 0 ,B_R \neq 0) \right\} p_{RD}.
\end{equation}

The average service rate of the relay, $\mu_{R}$, in the first hypothetical system is the same with the saturated throughput of the relay obtained in (\ref{eqn:saturated_thr_R}). From Loyne's criterion, the relay is stable if $\lambda_{R,total} < \mu_R$.
\begin{equation}
\lambda_{R}+ \frac{(1-p_{SD})p_{SR}} {p_{SD}+(1-p_{SD})p_{SR}}\lambda_{S} < \min \left(\delta_R ,q_R \right) \left[1 - \min \left(\delta_S ,q_S \right) \right] p_{RD}.
\end{equation}

The average number of packets per active slot for $R$ is $\left[1 - \min \left(\delta_S ,q_S \right) \right] q_R p_{RD}$, thus the fraction of active slots is given by
\begin{equation} \label{eqn:prob_active_R_1st_hyp_system}
 \mathrm{Pr}\left(B_R \neq 0, Q_R\neq 0 \right) = \frac{\lambda_{R}+ \frac{(1-p_{SD})p_{SR}} {p_{SD}+(1-p_{SD})p_{SR}}\lambda_{S}}{\left[1-\min \left(\delta_S ,q_S \right) \right] q_R p_{RD}}
\end{equation}

After changing (\ref{eqn:prob_nonempty_battery}) and (\ref{eqn:prob_active_R_1st_hyp_system}) into (\ref{eqn:service_rate_S}), the service rate for the source becomes

\begin{equation}
\begin{aligned}
\mu_S= \min \left(\delta_S ,q_S \right) \left[1 - \frac{\lambda_{R}+ \frac{(1-p_{SD})p_{SR}} {p_{SD}+(1-p_{SD})p_{SR}}\lambda_{S}}{\left[1-\min \left(\delta_S ,q_S \right) \right] p_{RD}} \right] \\
\times \left[p_{SD}+(1-p_{SD})p_{SR} \right].
\end{aligned}
\end{equation}

The queue in $S$ is stable if $\lambda_S < \mu_S$ and after some manipulations we obtain

\begin{equation}
\begin{aligned}
\left[1+ \frac{\min \left(\delta_S ,q_S \right) (1-p_{SD}) p_{SR} }{\left[1-\min \left(\delta_S ,q_S \right) \right]p_{RD}} \right] \lambda_S + \\
+\frac{\min \left(\delta_S ,q_S \right) \left[ p_{SD} + (1-p_{SD})p_{SR} \right]}{
\left[1-\min \left(\delta_S ,q_S \right) \right]p_{RD}} \lambda_R \\
 < \min \left(\delta_S ,q_S \right) \left[ p_{SD} + (1-p_{SD})p_{SR} \right].
\end{aligned}
\end{equation}

The derived stability conditions from the first hypothetical system are summarized in (\ref{eqn:R1}).

In the second hypothetical system, the relay node transmits dummy packets and all the other assumptions remain intact. Thus, the average service rate for the source given by (\ref{eqn:service_rate_S}), becomes

\begin{equation}
\begin{aligned}
\mu_S= \left\{ q_S (1-q_R) \mathrm{Pr}\left(B_S \neq 0, B_R \neq 0 \right) +q_S \mathrm{Pr}(B_S \neq 0, B_R = 0) \right\} \\
\times \left[p_{SD}+(1-p_{SD})p_{SR} \right],
\end{aligned}
\end{equation}

which is equal to saturated throughput of the source and is given by

\begin{equation}
\begin{aligned}
\mu_S=\min \left(\delta_S ,q_S \right) \left[1 - \min \left(\delta_R ,q_R \right) \right]  \left[p_{SD}+(1-p_{SD})p_{SR} \right].
\end{aligned}
\end{equation}
From Loyne's theorem, the queue in source is stable if $\lambda_S < \mu_S$ thus
\begin{equation}
\begin{aligned}
\lambda_S < \min \left(\delta_S ,q_S \right) \left[1 - \min \left(\delta_R ,q_R \right) \right] \left[p_{SD}+(1-p_{SD})p_{SR} \right].
\end{aligned}
\end{equation}
The average number of packets per active slot for $S$ is $ q_S \left[1 - \min \left(\delta_R ,q_R \right) \right] \left[p_{SD}+(1-p_{SD})p_{SR} \right] $.
The fraction of active slots for the source $S$ is
\begin{equation} \label{eqn:prob_active_S_2nd_hyp_system}
 \mathrm{Pr}\left(B_S \neq 0, Q_S\neq 0 \right) = \frac{\lambda_{S}}{q_S \left[1 - \min \left(\delta_R ,q_R \right) \right] \left[p_{SD}+(1-p_{SD})p_{SR} \right]}.
\end{equation}

After replacing from (\ref{eqn:prob_nonempty_battery}) and (\ref{eqn:prob_active_S_2nd_hyp_system}) into (\ref{eqn:service_rate_R}), the service rate for the relay is

{\scriptsize
\begin{equation}
\mu_R=\min \left(\delta_R ,q_R \right) \left[1- \frac{\lambda_{S}}{\left[1 - \min \left(\delta_R ,q_R \right) \right] \left[p_{SD}+(1-p_{SD})p_{SR} \right]}\right] p_{RD}.
\end{equation}
}

The queue in the relay node $R$ is stable if $\lambda_{R,total} < \mu_R$ and after some manipulations we obtain

\begin{equation}
\begin{aligned}
\lambda_R + \frac{\left[1 - \min \left(\delta_R ,q_R \right) \right](1-p_{SD})p_{SR}+\min \left(\delta_R ,q_R \right) p_{RD}}{\left[1 - \min \left(\delta_R ,q_R \right) \right] \left[p_{SD}+(1-p_{SD})p_{SR} \right]} \lambda_S  \\
< \min \left(\delta_R ,q_R \right) p_{RD}.
\end{aligned}
\end{equation}

The derived stability conditions from the second hypothetical system are given by (\ref{eqn:R2}).

An important observation made in \cite{rao:stability} is that the stability conditions obtained by using the stochastic dominance technique are not merely sufficient conditions for the stability of the original system but are sufficient and necessary conditions. However, the \emph{indistinguishability} argument does not apply to our problem. In a system with batteries, the dummy packet transmissions affect the dynamics of the batteries. For example, there are instants when a node is no more able to transmit in the hypothetical system
because of the lack of energy, while it is able to transmit in the original system, thus it may result to a better chance of success for the other node.

The obtained stability conditions are necessary conditions of the original system and are summarized in Proposition~\ref{prop:necessary_conditions}.

\section{Conclusion} \label{sec:ECOOP_conclusion}

In this paper, we studied the effect of energy constraints on a wireless network with energy harvesting source and relay and a destination. The source and the relay nodes have external arrivals and network-level cooperation is employed in which the relay forwards a fraction of the source's traffic to the destination. We derived necessary and sufficient stability conditions of the above cooperative communication scenario. A next step is to obtain the closure for the inner and outer bounds presented here. Further extensions will investigate the effect of finite battery capacity and that of multi-packet reception.

\vspace{-5mm}
\bibliographystyle{IEEEtran}
\bibliography{thesis}

\end{document}